\documentclass[12pt]{article}

\usepackage{amsthm}
\usepackage{amssymb}
\usepackage{amsmath}
\usepackage{color}
\usepackage[ruled, linesnumbered, noend]{algorithm2e}
\usepackage[hidelinks]{hyperref}

\newcommand{\MS}{\Delta}

\newcommand{\CC}{\mathcal{C}}

\theoremstyle{definition}
\newtheorem{theorem}{Theorem}

\newtheorem{lemma}{Lemma}
\newtheorem{proposition}{Proposition}
\newtheorem{definition}{Definition}

\newtheorem{problem}{Problem}

\title{Listing Small Minimal Separators of a Graph}
\author{Tuukka Korhonen \\ {\small \texttt{tuukka.m.korhonen@helsinki.fi}} \\ {\small \url{https://tuukkakorhonen.com}}}

\begin{document}

\maketitle

\begin{abstract}
Let $G$ be a graph and $a,b$ vertices of $G$.
A minimal $a,b$-separator of $G$ is an inclusion-wise minimal vertex set of $G$ that separates $a$ and $b$.
We consider the problem of enumerating the minimal $a,b$-separators of $G$ that contain at most $k$ vertices, given some integer $k$.
We give an algorithm which enumerates such minimal separators, outputting the first $R$ minimal separators in at most $poly(n) R \cdot \min(4^k, R)$ time for all $R$.
Therefore, our algorithm can be classified as fixed-parameter-delay and incremental-polynomial time.
To the best of our knowledge, no algorithms with non-trivial time complexity have been published for this problem before.
We also discuss barriers for obtaining a polynomial-delay algorithm.
\end{abstract}

\section{Introduction}
Recent state-of-the-art algorithm implementations for determining the tree\-width~\cite{DBLP:conf/sea2/Tamaki19} and the treedepth~\cite{brokkelkamp_et_al:LIPIcs:2020:13332,korhonen:LIPIcs:2020:13333,xu_et_al:LIPIcs:2020:13334} of a graph employ a subroutine that enumerates the minimal separators of the graph that contain at most $k$ vertices, for some bound $k$.
This enumeration is in fact reported as the bottleneck of these implementations.

The problem of enumerating size bounded minimal separators is also a natural refinement of two well-known enumeration problems: the enumeration of (not necessarily minimal) $a,b$-separators of a graph with size at most $k$ and the enumeration of minimal $a,b$-separators without size bound.
Both of them have received significant attention~\cite{DBLP:journals/ijfcs/BerryBC00,DBLP:journals/orl/Hamacher82,DBLP:conf/soda/Kanevsky90,DBLP:journals/siamcomp/KloksK98,DBLP:journals/tcs/ShenL97,DBLP:journals/dam/Takata10} and admit polynomial-delay algorithms~\cite{DBLP:journals/orl/Hamacher82,DBLP:journals/dam/Takata10}.

In this paper we give the following enumeration algorithm.

\begin{theorem}
\label{the:alg}
There is an algorithm that given a graph $G$, a pair of vertices $a,b \in V(G)$, and an integer $k$, enumerates the minimal $a,b$-separators of $G$ that contain at most $k$ vertices, outputting the first $R$ minimal separators in $O^*(R \cdot \min(4^k, R))$\footnote{The $O^*(\cdot)$ notation suppresses factors polynomial in the input size.} time for all $R$.
\end{theorem}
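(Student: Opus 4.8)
The plan is to build a recursive branching enumeration organised around the \emph{closest minimal separator} to $a$, and to control the branches that fail the size bound using important separators. First I would record the standard correspondence between a minimal $a,b$-separator $S$ and its $a$-side $R_a(S)$, the set of vertices reachable from $a$ in $G-S$; here $S=N(R_a(S))$, both sides are full components, and the sets $R_a(S)$ are ordered by inclusion with a unique minimum. That minimum is attained by the separator $S_0$ closest to $a$, which can be taken to be a minimum-cardinality $a,b$-cut and is computable by one max-flow computation; in particular $|S_0|$ equals the minimum size of any $a,b$-separator, so if $|S_0|>k$ there is no separator of size at most $k$ and we may stop.

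For the recursion, every minimal separator $S\neq S_0$ has $R_a(S)\supsetneq R_a(S_0)$, so $S$ ``pushes past'' at least one vertex of $S_0$. Fixing an order $v_1,\dots,v_t$ on $S_0$, I would split the separators other than $S_0$ according to the least index $i$ with $v_i\in R_a(S)$, and recurse on the instance in which $v_i$ (together with everything already forced to the $a$-side) is merged into $a$, while $v_1,\dots,v_{i-1}$ are forbidden from the $a$-side. Each recursive node outputs its own closest separator; we design the recursion so that these outputs are pairwise distinct and every size-$\le k$ minimal separator is produced, the delicate point being to produce each one only once. Pruning a node by testing whether its minimum cut still has size at most $k$ is a polynomial, correct test for whether the subproblem contains \emph{some} size-$\le k$ separator pushing past $v_i$.

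The obstruction to polynomial delay, and the place where the $4^k$ factor appears unavoidable with current techniques, is the constraint that $v_1,\dots,v_{i-1}$ must lie in $S$ or on the $b$-side but \emph{not} strictly on the $a$-side. This ``not pushed past'' requirement does not reduce to a min-cut condition, so the cheap max-flow pruning can admit a node whose every cheap continuation nonetheless pushes past some earlier $v_j$ and is thus a misattributed duplicate, producing wasted exploration. To bound the waste I would charge each explored-but-unproductive subtree to an important $a,b$-separator: recall that an important separator is a minimal one of inclusion-maximal $a$-side among separators of at most its cardinality, that there are at most $4^k$ of them of size $\le k$ (Marx), and that they are listable in $O^*(4^k)$ time. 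The key lemma I would prove is that every maximal unproductive subtree injects into a distinct important separator of size $\le k$, so the number of explored nodes exceeds the number of outputs by at most a $poly(n)\cdot 4^k$ factor per output.

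Given this, the running time follows from two complementary estimates of the work spent between consecutive outputs. Since each node costs $poly(n)$ and each maximal unproductive subtree is charged to one of at most $4^k$ important separators, the delay is $O^*(4^k)$ and the time to produce $R$ outputs is $O^*(R\cdot 4^k)$. Alternatively, because after $R$ outputs only $O(R)$ distinct $a$-sides have ever been generated, one can cap the exploration following each output at $O^*(R)$ nodes before the next output must appear, giving $O^*(R^2)$. Taking the minimum yields the claimed $O^*(R\cdot\min(4^k,R))$ bound. The real work, and the step I expect to be hardest, is the charging lemma that ties unproductive branches to distinct important separators, since this is exactly what converts the (provably obstructed) polynomial-delay goal into a fixed-parameter delay guarantee.
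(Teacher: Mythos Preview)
Your recursion and the paper's differ in a way that matters. The paper uses Takata's branching on a pair $(C,X)$ with $X\subseteq N(C)$, where $X$ means ``forced into $S$'' (not merely ``not on the $a$-side''). The emptiness question at a node is then: is there a minimal $a,b$-separator $S$ with $S\cap C=\emptyset$, $X\subseteq S$, $|S|\le k$? The paper's key lemma answers this directly: it holds iff some important $\{b\},C$-separator of size $\le k$ contains $X$. The proof is two lines---if $S$ witnesses non-emptiness but is not important, push its $b$-component outward to an important $S'$ with $|S'|\le|S|$; since neither $b$-component meets $N(C)$ one gets $X\subseteq S\cap N(C)\subseteq S'\cap N(C)$. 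So the emptiness test is simply ``enumerate the $\le 4^k$ important $\{b\},C$-separators and check $X\subseteq S$''. No charging argument is needed. The $\min(4^k,R)$ bound then falls out because that enumeration is polynomial-delay and every important $\{b\},C$-separator it produces \emph{is itself} a minimal $a,b$-separator of size $\le k$; hence during any emptiness test, either a fresh output appears or all $\le R$ previously output separators are revisited and the test terminates.

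Your proposal hinges instead on an unproved charging lemma: that every maximal unproductive subtree injects into a distinct important separator. You flag this as the hard step, but I do not see a mechanism for it. Your side constraint ``$v_1,\dots,v_{i-1}\notin R_a(S)$'' allows each $v_j$ either in $S$ or on the $b$-side; this disjunction is not the containment condition $X\subseteq S$ that important separators interact with cleanly, and an unproductive subtree (min-cut $\le k$, yet every cheap separator pushes past some earlier $v_j$) has no evident canonical important separator to charge to, let alone a distinct one. Your $O^*(R^2)$ argument has the same gap: the claim ``after $R$ outputs only $O(R)$ distinct $a$-sides have been generated'' conflates outputs with explored nodes, and the whole point is that unproductive exploration may visit many nodes without producing outputs. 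In short, the paper does not bound wasted work by charging; it eliminates wasted work by making the emptiness test itself an important-separator enumeration whose by-products are valid outputs.
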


Our technique for obtaining this algorithm is to combine the algorithm of Takata~\cite{DBLP:journals/dam/Takata10} for enumerating minimal separators with the important separators technique developed in~\cite{DBLP:journals/algorithmica/ChenLL09,DBLP:journals/tcs/Marx06} and exposed in~\cite{DBLP:books/sp/CyganFKLMPPS15}.
We obtain our algorithm by using important separators to solve the following decision problem.

\begin{problem}
\label{pro:dec}
Given a graph $G$, a pair of vertices $a,b \in V(G)$, integer $k$, and vertex sets $C \subseteq V(G)$ and $X \subseteq N(C)$ such that $a \in C$ and $G[C]$ is connected, decide if there is a minimal $a,b$-separator $S \subseteq V(G) \setminus C$ with $|S| \le k$ and $X \subseteq S$.
\end{problem}

In particular, Problem~\ref{pro:dec} is the problem of determining if a given subtree of the search tree of Takata's algorithm contains a minimal separator of size at most $k$.
By the $O(n)$ depth of the search tree and standard techniques in enumeration algorithms, solving Problem~\ref{pro:dec} in time $f(G, k, R)$ implies an $f(G, k, R)$-delay algorithm for enumerating minimal $a,b$-separators of size at most $k$.

The following theorem gives evidence why this approach cannot be directly applied to obtain a polynomial-delay algorithm.

\begin{theorem}
\label{the:np}
Problem~\ref{pro:dec} is NP-complete, even when the graph $G$ is bipartite with a bipartition $\{\{a\} \cup N(b), \{b\} \cup N(a)\}$.
\end{theorem}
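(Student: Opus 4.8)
I begin by noting that membership in NP is routine: given a candidate $S$, one checks in polynomial time that $S$ is an $a,b$-separator, that it is inclusion-minimal (for each $s \in S$, test whether $S \setminus \{s\}$ still separates $a$ from $b$), and that $|S| \le k$, $X \subseteq S$, and $S \cap C = \emptyset$. For hardness, I would first record what the prescribed bipartition buys us. Since every vertex other than $a,b$ lies in $N(a)$ or $N(b)$, and the only edges are $a$–$N(a)$, $b$–$N(b)$, and $N(a)$–$N(b)$, every $a,b$-separator is exactly a vertex cover of the bipartite \emph{middle} graph $H$ with parts $N(a), N(b)$ and edge set the $N(a)$–$N(b)$ edges, and conversely. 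Moreover $S$ is inclusion-\emph{minimal} iff every $s \in S$ has a neighbour outside $S$; hence the minimal $a,b$-separators of $G$ are precisely the minimal vertex covers of $H$ (equivalently, the complements of the maximal independent sets of $H$). Thus Problem~\ref{pro:dec} is equivalent to: does $H$ admit a minimal vertex cover $S$ with $X \subseteq S$, $S \cap C = \emptyset$, and $|S| \le k$?

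The plan is then to engineer $H, C, X, k$ so that the constrained minimal vertex covers of $H$ correspond to the satisfying assignments of a given $3$-CNF formula. The key lever is minimality. I would take $C = \{a\}$ (so that avoiding $C$ is automatic) and force vertices into the cover by placing them in $X \subseteq N(a)$. For such a forced vertex $x \in N(a)$ the $N(a)$-side privacy is free, since $x$ is adjacent to $a$; minimality therefore reduces to the genuine requirement that $x$ have a neighbour in $N(b) \setminus S$, a disjunction over the opposite side asserting that at least one $N(b)$-neighbour of $x$ is left out of the cover. This is exactly the mechanism I would use to encode clauses.

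For each variable I would use a \emph{complement edge}: vertices $P_i \in N(a)$ and $Q_i \in N(b)$ joined by an edge. Covering forces at least one endpoint into $S$, while minimality forces at most one (the only admissible private neighbour of each endpoint is the other, the remaining neighbours of $Q_i$ being forced-in clause vertices), so in every feasible cover exactly one of $P_i, Q_i$ lies in $S$; this supplies a consistent truth value for free. For each clause I would add a forced vertex $c_j \in X \subseteq N(a)$ whose $N(b)$-neighbours are the variable vertices selected by the clause's literals; by the previous paragraph, $c_j$ is minimal iff at least one of those neighbours lies outside $S$, i.e. iff the clause is satisfied. Setting the budget $k$ to the common size $n+m$ that any feasible cover must have, the direction ``satisfying assignment $\Rightarrow$ feasible minimal cover'' is immediate, and the converse is read off by decoding each complement edge into a truth value.

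The difficulty I expect to do the real work lies in the fact that the constraints minimality can express are one-sided: a forced $N(a)$-vertex can only \emph{read} the event ``a given $N(b)$-vertex is outside the cover'', an OR over $N(b)$. A literal and its negation have opposite parity with respect to the bipartition, so an $N(a)$-clause-vertex faithfully tests positive occurrences but not negative ones — the required $N(b)$-side copy of an $N(a)$-vertex is parity-forbidden by single complement edges, and placing clause vertices on the $N(b)$-side instead fails, since any way of putting such a vertex into $N(C)$ hands it a free private neighbour and destroys its constraint. Overcoming this is the crux. I would either reduce from a bipartition-friendly variant (for instance, splitting clauses by polarity as in monotone $3$-SAT) together with budget-enforced propagation gadgets that copy a truth value across the bipartition at a controlled cost, or introduce auxiliary penalty vertices so that only consistent, fully satisfying assignments keep the cover within budget. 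Once such a negation gadget and its exact budget accounting are fixed, verifying both directions of the reduction — and confirming that the resulting $G$ has the required bipartition — is routine.
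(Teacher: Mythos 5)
Your setup is right and matches the paper's: membership in NP, the identification of minimal $a,b$-separators with minimal vertex covers of the middle bipartite graph, taking $C=\{a\}$ so the avoidance constraint is vacuous, forcing vertices into the cover via $X\subseteq N(a)$, and---most importantly---the observation that minimality of a forced vertex $x\in X$ is exactly the disjunctive constraint ``some $N(b)$-neighbour of $x$ is left out of the cover.'' That is precisely the engine of the paper's reduction. But your proof has a genuine gap at the point you yourself flag as the crux: the negation gadget. You correctly diagnose that the mechanism is one-sided (an $N(a)$-vertex can only test an OR of ``out-of-cover'' events on the $N(b)$ side), you argue that the obvious fixes fail for parity reasons, and you then leave the construction unfinished, gesturing at monotone $3$-SAT with ``propagation gadgets'' or ``penalty vertices'' without specifying either. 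Note that your own analysis undercuts the monotone-$3$-SAT suggestion: the all-negative clauses would want clause vertices on the $N(b)$ side, which you have already argued does not work. As written, the reduction does not exist yet.

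The resolution in the paper is to abandon SAT entirely and reduce from \textsc{Set Cover}, which is a monotone covering problem, so the one-sided OR is all that is needed and no negation gadget ever arises. Concretely: put a vertex $v_T\in N(b)$ for each set $T\in F$, put each element $z\in U$ into $N(a)$ with edges to the $v_T$ for $T\ni z$, and set $X=U$, so minimality forces each element to have some incident set with $v_T\notin S$---i.e.\ the sets left out of the cover form a set cover of $U$. The budget on the \emph{number} of chosen sets is then enforced by a pendant-pair trick: two extra vertices $u_T,w_T\in N(a)$ adjacent only to $a$ and $v_T$, so that excluding $v_T$ from $S$ costs $2$ (both $u_T,w_T$ must enter $S$) while including it costs $1$; hence $|S|=|U|+|F|+|F'|$ and the size bound $|U|+|F|+k$ caps $|F'|$ at $k$. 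Your framework would accommodate this construction verbatim; the missing idea is simply to source the reduction from a problem whose constraints are already monotone ORs, rather than to fight the parity obstruction you identified.
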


In this bipartite case of Theorem~\ref{the:np}, size bounded minimal $a,b$-separator enumeration corresponds to size bounded minimal vertex cover enumeration in the graph $G \setminus \{a,b\}$, which to the best of our knowledge is a similarly open problem.
We note that a recent paper gives an \emph{approximate} enumeration algorithm for size bounded minimal vertex cover enumeration~\cite{DBLP:journals/corr/abs-2009-08830}.

\section{Notation}
We use standard graph notation.
A graph $G$ has a vertex set $V(G)$ and edge set $E(G)$.
The subgraph $G[X]$ induced by $X \subseteq V(G)$ has $V(G[X]) = X$ and $E(G[X]) = \{\{u, v\} \in E(G) \mid u,v \in X\}$.
We denote induced subgraphs also by notation $G \setminus X = G[V(G) \setminus X]$.
We denote the vertex sets of connected components of $G$ by $\CC(G)$.
The set of neighbors of a vertex $v$ is denoted by $N(v)$ and the neighborhood of a vertex set $X$ is $N(X) = \bigcup_{v \in X} N(v) \setminus X$.
The set of closed neighbors of $v$ is $N[v] = N(v) \cup \{v\}$ and the closed neighborhood of $X$ is $N[X] = N(X) \cup X$.

For a pair of vertices $a,b \in V(G)$, a minimal $a,b$-separator of $G$ is a vertex set $S \subseteq V(G)$ such that $a$ and $b$ are in different connected components of $G \setminus S$ and $S$ is inclusion-wise minimal with respect to this.
A full component of a set $X \subseteq V(G)$ is a component $C \in \CC(G \setminus X)$ with $N(C) = X$.
It is well-known that $S$ is a minimal $a,b$-separator if and only if $S$ has distinct full components containing $a$ and $b$.

An enumeration algorithm with input $I$ has delay $f(I, R)$ if for all $R$ it outputs the first $R$ solutions in at most $R \cdot f(I, R)$ time.
A polynomial-delay enumeration algorithm has delay $f(I, R) = poly(|I|)$ for some polynomial $poly(|I|)$.
An incremental-polynomial enumeration algorithm has delay $f(I, R) = poly(|I| + R)$ for some polynomial $poly(|I| + R)$.

\section{The Algorithm}
We first discuss Takata's algorithm, then important separators, and then show how these can be combined to obtain our algorithm.
In this section we always consider minimal $a,b$-separators of a graph $G$, so we will not spell this out in our definitions.

\subsection{Takata's Recurrence}
We overview the Takata's recurrence for enumerating all minimal $a,b$-separators with polynomial delay~\cite{DBLP:journals/dam/Takata10}.
We give short proofs for completeness and because our presentation is different from~\cite{DBLP:journals/dam/Takata10}.

\begin{definition}
\label{def:takata}
Let $C$ and $X$ be vertex sets $C \subseteq V(G)$ and $X \subseteq N(C)$ so that $a \in C$ and $G[C]$ is connected.
We denote by $\MS(G, C, X)$ the set of minimal $a,b$-separators $S$ of $G$ such that $S \subseteq V(G) \setminus C$ and $X \subseteq S$.
\end{definition}

The root of Takata's recurrence is $\MS(G) = \MS(G, \{a\}, \emptyset)$.
The leaves of the recurrence have $X = N(C)$, in which case $\MS(G, C, N(C)) = \{N(C)\}$ if $N(C)$ is a minimal $a,b$-separator and $\emptyset$ otherwise.
The internal nodes are defined by the following proposition.

\begin{proposition}[\cite{DBLP:journals/dam/Takata10}]
\label{pro:takata_rec}
Let $C$ and $X$ be as in Definition~\ref{def:takata} and $v$ any vertex in $N(C) \setminus X$.
The sets $\MS(G, C \cup \{v\}, X)$ and $\MS(G, C, X \cup \{v\})$ are disjoint, and their union is equal to $\MS(G, C, X)$.
\end{proposition}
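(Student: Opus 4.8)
The plan is to reduce everything to the membership definition of $\MS$: a set $S$ lies in $\MS(G, C, X)$ precisely when $S$ is a minimal $a,b$-separator with $S \cap C = \emptyset$ and $X \subseteq S$. Before arguing about memberships, I would first verify that both $(G, C \cup \{v\}, X)$ and $(G, C, X \cup \{v\})$ are legitimate instances of Definition~\ref{def:takata}. For the first instance: $a \in C \subseteq C \cup \{v\}$; since $v \in N(C)$ it has a neighbor in $C$, so $G[C \cup \{v\}]$ is connected because $G[C]$ is; and every $x \in X \subseteq N(C)$ has a neighbor in $C \cup \{v\}$, while $x \notin C$ and $x \neq v$ (the latter since $v \notin X$ whereas $x \in X$), so $X \subseteq N(C \cup \{v\})$. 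For the second instance: $a \in C$, $G[C]$ is connected, and $X \cup \{v\} \subseteq N(C)$ since $X \subseteq N(C)$ and $v \in N(C)$. Thus both sets on the left-hand side are well-defined.

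Disjointness is then immediate: if $S \in \MS(G, C \cup \{v\}, X)$ then $S \subseteq V(G) \setminus (C \cup \{v\})$ forces $v \notin S$, whereas $S \in \MS(G, C, X \cup \{v\})$ forces $v \in X \cup \{v\} \subseteq S$, and no $S$ can satisfy both.

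For the union I would prove the two inclusions separately. The inclusion $\MS(G, C \cup \{v\}, X) \cup \MS(G, C, X \cup \{v\}) \subseteq \MS(G, C, X)$ follows by weakening the conditions: in the first case $S \subseteq V(G) \setminus (C \cup \{v\}) \subseteq V(G) \setminus C$ and $X \subseteq S$, and in the second $S \subseteq V(G) \setminus C$ and $X \subseteq X \cup \{v\} \subseteq S$. For the reverse inclusion, take any $S \in \MS(G, C, X)$ and split on whether $v \in S$. If $v \in S$, then $X \cup \{v\} \subseteq S$ and $S \cap C = \emptyset$ already hold, so $S \in \MS(G, C, X \cup \{v\})$. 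If $v \notin S$, then $S \cap (C \cup \{v\}) = \emptyset$ and $X \subseteq S$, so $S \in \MS(G, C \cup \{v\}, X)$, invoking the well-formedness of that instance established above.

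The argument is essentially bookkeeping with the defining conditions, so there is no single hard step; the only place needing care is confirming that the recursive instances remain of the required form — in particular that $G[C \cup \{v\}]$ stays connected (which is exactly why we restrict $v$ to $N(C)$) and that $X$ still lies in $N(C \cup \{v\})$. It is also worth noting explicitly that $S \cap C = \emptyset$ together with $G[C]$ connected and $a \in C$ guarantees that all of $C$ lies in the component of $a$ in $G \setminus S$, so the recursion never loses track of which side $C$ belongs to.
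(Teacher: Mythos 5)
Your proof is correct and follows the same approach as the paper, which simply observes that the first set consists of the separators in $\MS(G, C, X)$ not containing $v$ and the second of those containing $v$. Your additional verification that both recursive instances satisfy the hypotheses of Definition~\ref{def:takata} is a worthwhile detail the paper leaves implicit.
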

\begin{proof}
The first case corresponds to the minimal separators that do not contain $v$ and the second case to the minimal separators that contain $v$.
\end{proof}

Takata's algorithm uses the search tree constructed by Proposition~\ref{pro:takata_rec}.
To guarantee that the search in this tree finds minimal separators with polynomial delay, it is sufficient to observe that its depth is at most $n$, and to design a polynomial time algorithm for determining if the currently explored subtree is empty, i.e., if $\MS(G, C, X) = \emptyset$.
The following proposition provides this emptiness check.

\begin{proposition}[\cite{DBLP:journals/dam/Takata10}]
\label{pro:takata_empty}
The set $\MS(G, C, X)$ is not empty if and only if there is a component $C_b \in \CC(G \setminus N[C])$ such that $b \in C_b$ and $X \subseteq N(C_b)$.
\end{proposition}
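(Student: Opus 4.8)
The plan is to prove the stated equivalence using the full-component characterization of minimal separators recalled just above the proposition: a set $S$ is a minimal $a,b$-separator exactly when $G \setminus S$ has distinct full components containing $a$ and $b$. I would handle the two implications separately, and in both the guiding idea is to track how the full component on the $b$-side relates to the components of $G \setminus N[C]$.

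For the forward direction, suppose $S \in \MS(G, C, X)$. Since $S \cap C = \emptyset$ and $G[C]$ is connected with $a \in C$, the set $C$ lies inside the component $C_a$ of $G \setminus S$ containing $a$, which by the characterization is a full component with $N(C_a) = S$. Let $C_b'$ be the full component of $S$ containing $b$. First I would check that $C_b'$ avoids $N[C]$: it is disjoint from $S$, disjoint from $C_a \supseteq C$, and since $N(C) \subseteq C_a \cup S$ it is disjoint from $N(C)$ as well. Hence $C_b'$ sits inside a single component $C_b$ of $G \setminus N[C]$ with $b \in C_b$. Finally, for $x \in X \subseteq S = N(C_b')$ there is a neighbor of $x$ in $C_b' \subseteq C_b$, while $x \in N(C) \subseteq N[C]$ is deleted from $G \setminus N[C]$, so $x \in N(C_b)$; thus $X \subseteq N(C_b)$, as required.

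For the converse, given such a $C_b$ I would take $S = N(C_b)$ as the candidate separator and verify the three requirements. The key structural fact, which I expect to be the crux, is that $N(C_b) \subseteq N(C)$: because $C_b$ is a \emph{component} of $G \setminus N[C]$, every neighbor of $C_b$ lies in $N[C] = C \cup N(C)$; and no vertex of $C$ can be adjacent to $C_b$, since such a neighbor would have to lie in $N[C]$ yet belong to $C_b \subseteq V(G) \setminus N[C]$. This immediately gives $S \subseteq N(C) \subseteq V(G) \setminus C$ and $X \subseteq N(C_b) = S$. It remains to show $S$ is a minimal $a,b$-separator: $C_b$ is a full component containing $b$ by construction, and since $S \cap C = \emptyset$ the connected set $C$ lies in one component $C_a$ of $G \setminus S$ with $a \in C_a$, which I would check is full because each $x \in S = N(C_b) \subseteq N(C)$ is adjacent to some vertex of $C \subseteq C_a$, giving $S \subseteq N(C_a)$ with the reverse inclusion automatic. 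With distinct full components on both sides, $S$ is a minimal $a,b$-separator, so $S \in \MS(G, C, X)$ and the set is nonempty.

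Both directions rest on the same two neighborhood containments, namely $N(C) \subseteq C_a \cup S$ in the forward direction and $N(C_b) \subseteq N(C)$ in the converse, so the only genuinely delicate point is establishing these correctly; once $N(C_b) \subseteq N(C)$ is in hand, fullness of the $a$-side component and all the membership conditions follow by direct inspection.
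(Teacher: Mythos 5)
Your proof is correct and takes essentially the same route as the paper's: the converse builds the witness separator $N(C_b)$ using the containment $N(C_b) \subseteq N(C)$ to get a full component on the $a$-side, and the forward direction places the $b$-side full component $C'_b$ of $S$ inside $C_b$ to conclude $X \subseteq N(C_b)$. Your write-up just spells out the neighborhood containments that the paper leaves implicit.
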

\begin{proof}
If $b \in N[C]$ there is no minimal $a,b$-separator that does not contain vertices in $C$.
Otherwise $N(C_b)$ is a minimal $a,b$-separator in $\MS(G, C, \emptyset)$ because it has a full component $C_b$ containing $b$ and $C_a \supseteq C$ containing $a$ because $N(C_b) \subseteq N(C)$.
Now it suffices to show that if $N(C_b)$ does not subsume $X$ then no minimal separator in $\MS(G, C, \emptyset)$ subsumes $X$.
This follows from the fact that for any such minimal separator $S'$ the full component $C'_b$ of $S'$ containing $b$ is a subset of $C_b$ and thus $N(C'_b) \cap N(C) \subseteq N(C_b) \cap N(C)$.  
\end{proof}

Our algorithm is the same as Takata's algorithm, expect that we do not output minimal separators with size $>k$, and instead of determining if $\MS(G, C, X)$ is empty we determine if it contains minimal separators of size at most $k$.
For this we use important separators.

\subsection{Important Separators}
We overview the results on important separators~\cite{DBLP:journals/algorithmica/ChenLL09,DBLP:journals/tcs/Marx06} that we use.
This overview is based on the exposition of this technique in~\cite{DBLP:books/sp/CyganFKLMPPS15}.

\begin{definition}
Let $A,B \subseteq V(G)$ be vertex sets of $G$.
A set $S$ is a minimal $A,B$-separator if there exist components $C_A,C_B \in \CC(G \setminus S)$ with $A \subseteq C_A$, $B \subseteq C_B$, and $S = N(C_A) = N(C_B)$.
A minimal $A,B$-separator is an important $A,B$-separator if there is no minimal $A,B$-separator $S'$ such that $|S'| \le |S|$ and $C_A \subsetneq C'_A$, where $C'_A$ is the component of $G \setminus S'$ containing $A$.
\end{definition}

Important separators are exploited by using an enumeration algorithm that given vertex sets $A,B$ and an integer $k$ enumerates important $A,B$-separators of size at most $k$.
In particular, we use the following proposition.

\begin{proposition}[\cite{DBLP:books/sp/CyganFKLMPPS15}]
\label{pro:important}
Let $A,B \subseteq V(G)$ be vertex sets of $G$ and $k$ an integer.
There are at most $4^k$ important $A,B$-separators of $G$ of size at most $k$ and they can be enumerated with polynomial delay.
\end{proposition}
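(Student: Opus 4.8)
The plan is to prove both claims at once through a single branching scheme whose analysis rests on the submodularity of the neighborhood function. First I would record the submodular inequality $|N(Z_1)| + |N(Z_2)| \ge |N(Z_1 \cup Z_2)| + |N(Z_1 \cap Z_2)|$, which follows by charging each vertex according to its membership in $Z_1$ and $Z_2$. Writing $R(S)$ for the set of vertices reachable from $A$ in $G \setminus S$ (so that $S = N(R(S))$ for a minimal $A,B$-separator, with $A \subseteq R(S)$ and $B \cap R(S) = \emptyset$), submodularity shows that the $A$-sides of the minimum-size $A,B$-separators are closed under union and intersection: if $|N(Z_1)| = |N(Z_2)| = \lambda$ (the minimum separator size) then, since $N(Z_1 \cup Z_2)$ and $N(Z_1 \cap Z_2)$ are again $A,B$-separators and hence of size $\ge \lambda$, both inequalities are forced to be tight. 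Consequently there is a unique inclusion-wise maximal such $A$-side $R_{\max}$, and $S_{\min} := N(R_{\max})$ is the minimum separator ``closest to $B$'', computable in polynomial time from a maximum flow and itself important.

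The key lemma I would establish next is that every important $A,B$-separator $S$ satisfies $R_{\max} \subseteq R(S)$. Applying submodularity to $R(S)$ and $R_{\max}$ gives $|N(R(S) \cup R_{\max})| \le |N(R(S))| + |N(R_{\max})| - |N(R(S) \cap R_{\max})| \le |S|$, because $N(R(S) \cap R_{\max})$ is an $A,B$-separator of size $\ge \lambda = |N(R_{\max})|$. Hence if $R_{\max} \not\subseteq R(S)$, then $R(S) \cup R_{\max}$ is a strictly larger $A$-side realized by a separator of size $\le |S|$, contradicting the importance of $S$.

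With these structural facts in hand, I would fix any $v \in S_{\min}$ and split the important separators of size $\le k$ according to whether $v \in S$, analyzing the resulting binary recursion with the measure $\mu = 2k - \lambda$. The branch $v \notin S$ corresponds bijectively to the important $A \cup \{v\}, B$-separators of size $\le k$: since $v \in N(R_{\max}) \subseteq N(R(S))$ forces $v \in R(S)$, adding $v$ to $A$ changes neither the $A$-sides nor the domination relation; moreover the minimum $A \cup \{v\}, B$-cut is strictly larger than $\lambda$ --- otherwise a minimum cut would place $v$ on its $A$-side, i.e.\ $v \in R_{\max}$, contradicting $v \in N(R_{\max})$ --- so $\mu$ drops by at least $1$. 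The branch $v \in S$ injects, via $S \mapsto S \setminus \{v\}$, into the important $A,B$-separators of size $\le k-1$ of $G \setminus \{v\}$: one checks that $S \setminus \{v\}$ remains a minimal separator with the same $A$-side and that any dominating separator in $G \setminus \{v\}$ would, after re-adding $v$, dominate $S$ in $G$; here $\mu$ again drops by at least $1$ since deleting one vertex lowers the minimum cut by at most $1$. As each branch strictly decreases $\mu$ and $\mu \le 2k$ initially, the recursion has at most $2^{2k} = 4^k$ leaves, which yields the counting bound.

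For the enumeration, I would run this recursion as a depth-first search, outputting a separator at each base case (where $\lambda = 0$, so the vertices already committed to the separator form the unique surviving important one). Each node performs polynomial work --- a maximum-flow computation for $\lambda$, $R_{\max}$, and $S_{\min}$, plus, in the $v \in S$ branch, a polynomial-time check that re-adding $v$ gives a genuine minimal full-$A$-side separator so that no spurious sets are emitted. Equipping each node with a polynomial-time feasibility oracle that prunes subtrees containing no important separator of size $\le k$ guarantees that the search never backtracks through an empty subtree; since the recursion depth is $O(\mu) = O(n)$, the time between consecutive outputs is polynomial, i.e.\ the algorithm has polynomial delay. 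I expect the main obstacle to be the measure analysis of the two branches --- in particular the claim that moving $v$ to the $A$-side strictly increases the minimum cut --- together with the bookkeeping in the $v \in S$ branch needed to output every important separator exactly once while emitting no non-minimal sets.
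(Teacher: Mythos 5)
The paper does not actually prove this proposition --- it imports it wholesale from the parameterized algorithms literature with a citation --- so there is no in-paper argument to compare against; I will judge your reconstruction on its own terms. Your proof of the counting bound is the standard textbook argument and is essentially correct: submodularity of $Z \mapsto |N(Z)|$, the resulting lattice structure on the $A$-sides of minimum separators giving a unique maximal reach $R_{\max}$, the key lemma that every important separator's reach contains $R_{\max}$, and the two-way branching on a vertex $v \in N(R_{\max})$ analyzed with the measure $2k - \lambda$. The two branch analyses (strict increase of $\lambda$ when $v$ is moved to the source side, and the injection into important separators of $G \setminus \{v\}$ with budget $k-1$ otherwise) are the right ones, and $2^{2k} = 4^k$ leaves follows.

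The gap is in the polynomial-delay claim, which is the part of the proposition the paper actually leans on (it is what yields the $\min(4^k, R)$ factor in Theorem~\ref{the:alg}). As you yourself observe, the branch $v \in S$ is only an \emph{injection}: an important $A,B$-separator $S'$ of $G \setminus \{v\}$ need not yield an important (or even minimal) $A,B$-separator of $G$ after re-adding $v$, so the leaves of the recursion produce a superset of the important separators and must be filtered. Filtering at the leaves gives $O^*(4^k)$ \emph{total} time, but not polynomial delay: between two genuine outputs the search could wade through superpolynomially many spurious leaves. Your fix is to posit ``a polynomial-time feasibility oracle that prunes subtrees containing no important separator of size $\le k$,'' but you neither construct this oracle nor argue that it exists, and it is not obvious: a subtree is described by a set $D$ of vertices already committed to the separator and an augmented source $A' \supseteq A$, and deciding whether some important $A,B$-separator of size $\le k$ contains $D$ and has $A'$ in its reach is an extension problem whose polynomial-time solvability is precisely the content of the delay claim. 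Until you either exhibit that oracle (e.g., by showing that the ``locally farthest'' candidate $D \cup N(R'_{\max})$ computed in $G \setminus D$ from $A'$ is important whenever the subtree is nonempty, or by some other certificate) or find another route to polynomial delay, your argument establishes the $4^k$ bound and $O^*(4^k)$-time enumeration but not the delay guarantee as stated.
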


Proposition~\ref{pro:important} will be our tool to check if a subtree of the search tree in the enumeration algorithm is empty.

\subsection{Proof of Theorem~\ref{the:alg}}
Now we are ready to give our algorithm.
We modify Definition~\ref{def:takata} for the purpose of our algorithm.

\begin{definition}
\label{def:mss}
Let $C, X$ be vertex sets $C \subseteq V(G)$ and $X \subseteq N(C)$ so that $a \in C$ and $G[C]$ is connected.
We denote by $\MS(G, k, C, X)$ the set of minimal $a,b$-separators $S$ of $G$ such that $S \subseteq V(G) \setminus C$, $X \subseteq S$, and $|S| \le k$.
\end{definition}

This definition is analogous to Definition~\ref{def:takata}, except that it also includes a size bound $k$.
By the same arguments as given for Takata's algorithm, we can enumerate minimal $a,b$-separators of size at most $k$ with $f(G, k, R)$-delay if we have an $f(G, k, R)$ time algorithm for checking if $\MS(G, k, C, X)$ is empty.
The following lemma shows that we can use important separators to obtain this algorithm.

\begin{lemma}
\label{lem:important}
The set $\MS(G, k, C, X)$ is not empty if and only if there is an important $\{b\},C$-separator $S$ such that $X \subseteq S$ and $|S| \le k$.
\end{lemma}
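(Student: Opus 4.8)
The plan is to prove the biconditional in Lemma~\ref{lem:important} by establishing each direction separately, relating the combinatorial object $\MS(G, k, C, X)$ (a size-bounded minimal $a,b$-separator avoiding $C$ and containing $X$) to important $\{b\}, C$-separators. The guiding intuition is that a minimal $a,b$-separator $S$ disjoint from the connected set $C$ is exactly a separator between $b$ and $C$, and the important separator machinery lets us ``push'' the $b$-side component as far as possible toward $b$ without increasing the size, which is precisely the direction in which minimal $a,b$-separators (with their full component on the $a$/$C$-side maximized) live.

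\medskip
\noindent\emph{The easy direction ($\Leftarrow$).} Suppose there is an important $\{b\}, C$-separator $S$ with $X \subseteq S$ and $|S| \le k$. First I would argue that $S$ is in fact a minimal $a,b$-separator of $G$: by definition of a minimal $A,B$-separator, $S = N(C_A) = N(C_b)$ where $C_A \supseteq C$ (hence $a \in C_A$) and $C_b \ni b$ are distinct components of $G \setminus S$. Having distinct full components containing $a$ and $b$ is exactly the characterization of a minimal $a,b$-separator stated in the Notation section, so $S$ qualifies. Since $C_A \supseteq C$ and $C_A$ is disjoint from $S$, we get $S \subseteq V(G) \setminus C$; combined with $X \subseteq S$ and $|S| \le k$, this places $S \in \MS(G, k, C, X)$, so the set is nonempty.

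\medskip
\noindent\emph{The harder direction ($\Rightarrow$).} Suppose $\MS(G, k, C, X)$ contains some minimal $a,b$-separator $S_0$ with $S_0 \subseteq V(G)\setminus C$, $X \subseteq S_0$, and $|S_0| \le k$. I need to produce an important $\{b\}, C$-separator $S$ that still satisfies $X \subseteq S$ and $|S| \le k$. The natural strategy is to start from $S_0$, observe it is (or contains) a minimal $\{b\}, C$-separator, and then apply the standard ``replace by a dominating important separator'' argument: among all minimal $\{b\}, C$-separators $S'$ with $|S'| \le |S_0| \le k$ and with the $b$-side component $C_b'$ containing the $b$-side component determined by $S_0$, there is a maximal one (with respect to $C_b'$), which is by definition an important $\{b\}, C$-separator of size at most $k$. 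The obstacle — and the crux of the lemma — is to verify that this replacement preserves the constraint $X \subseteq S$. This is where the hypothesis $X \subseteq N(C)$ is essential: since the important separator's $b$-side component $C_b$ only grows (equivalently, the $C$-side component $C_A$ only shrinks) relative to $S_0$, I expect an argument paralleling the final sentence of the proof of Proposition~\ref{pro:takata_empty}, namely that $N(C_b) \cap N(C)$ is monotone in $C_b$, so every vertex of $X \subseteq N(C)$ that was separated off by $S_0$ remains on the boundary of the enlarged $C_b$ and hence stays in the new separator $S = N(C_b)$. Pinning down this monotonicity cleanly, and confirming that enlarging $C_b$ cannot drop a vertex of $X$ out of the neighborhood boundary while keeping $X \subseteq N(C)$, is the main technical point I would need to get right.
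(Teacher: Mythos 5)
Your proposal is correct and follows the paper's proof essentially step for step: the backward direction is the same observation that an important $\{b\},C$-separator $S$ with $X \subseteq S$ and $|S| \le k$ already lies in $\MS(G,k,C,X)$, and the forward direction replaces $S_0$ by a dominating important $\{b\},C$-separator and invokes monotonicity of $N(\cdot)\cap N(C)$, exactly as the paper does. The single point you flag as needing care closes in one line: $N(C)$ is contained in the union of $S$ and the component of $G\setminus S$ containing $C$, so the enlarged $b$-side component $C_b$ is disjoint from $N(C)$; hence each $v \in X \subseteq N(C'_b)\cap N(C)$ has a neighbour in $C'_b \subseteq C_b$ but does not itself lie in $C_b$, i.e.\ $v \in N(C_b) = S$.
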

\begin{proof}
For the if direction we observe that such $S$ satisfies $S \in \MS(G, k, C, X)$.
For the only if direction, let $S' \in \MS(G, k, C, X)$ and denote by $C'_b$ the component of $G \setminus S'$ containing $b$.
If $S'$ is not an important $\{b\},C$-separator then there is an important $\{b\},C$-separator $S$ with $|S| \le k$ and a component $C_b \in \CC(G \setminus S)$ with $C'_b \subseteq C_b$.
Because neither $C_b$ nor $C'_b$ intersects $N(C)$ we have that $N(C'_b) \cap N(C) \subseteq N(C_b) \cap N(C)$, and therefore $X \subseteq S$.
\end{proof}

Lemma~\ref{lem:important} asserts that we can check if $\MS(G, k, C, X)$ is empty by enumerating important $\{b\},C$-separators of size at most $k$.
By Proposition~\ref{pro:important} this can be done in $O^*(4^k)$ time.
To make the time complexity into $O^*(\min(4^k, R))$, where $R$ is the number of minimal separators already outputted, we note that the algorithm of Proposition~\ref{pro:important} works in polynomial delay and all important separators outputted by it are also minimal $a,b$-separators of size at most $k$.
Therefore we simply keep a set of already outputted minimal separators, and if an important separator given by Proposition~\ref{pro:important} is not in this set we output it.
Note that this will cause us to ``miss'' some outputs later, but this does not matter because the outputting of them can be seen just as moved forward.
This completes the proof of Theorem~\ref{the:alg}.

\section{Hardness}
We show that the problem of testing if $\MS(G, k, C, X)$ is empty is NP-hard even in graphs with bipartition $\{\{a\} \cup N(b), \{b\} \cup N(a)\}$, i.e., we prove Theorem~\ref{the:np}.
We reduce from set cover, which is NP-hard~\cite{DBLP:books/fm/GareyJ79}.

Let $U$ be a set and $F$ a family of subsets of $U$.
Given $U$, $F$, and an integer $k$, the set cover problem is to determine if there is a subset $F' \subseteq F$ with $|F'| \le k$ and $U = \bigcup_{T \in F'} T$.

We construct a graph $G(U, F)$ that has four layers, $\{a\}$, $N(a)$, $N(b)$, and $\{b\}$.
The vertices of $N(b)$ corresponds to sets in $F$, i.e., for each set $T \in F$ there is a vertex $v_T \in N(b)$.
For each vertex $v_T \in N(b)$ there are two vertices $u_T,w_T \in N(a)$ that are connected only to $v_T$ and $a$.
The other vertices in $N(a)$ are the elements of $U$.
We add an edge from $z \in U$ to $v_T \in N(b)$ if $z \in T$.

We first show that given a solution to the set cover problem we can construct a minimal $a,b$-separator in $\MS(G(U, F), |U|+|F|+k, \{a\}, U)$.
\begin{lemma}
If there is a subset $F' \subseteq F$ with $|F'| \le k$ and $U = \bigcup_{T \in F'} T$ then $\MS(G(U, F), |U|+|F|+k, \{a\}, U)$ is not empty.
\end{lemma}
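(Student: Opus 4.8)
The plan is to exhibit, from the cover $F'$, one explicit member of $\MS(G(U,F), |U|+|F|+k, \{a\}, U)$. By Definition~\ref{def:mss} the separator $S$ I produce must satisfy $a \notin S$, $U \subseteq S$, and $|S| \le |U|+|F|+k$, and on top of this $S$ must be a genuine minimal $a,b$-separator. The guiding design principle is that the forced inclusion $U \subseteq S$ can only be reconciled with minimality if, for every $z \in U$, some neighbor $v_T$ of $z$ survives on the $b$-side of the separator so that $z$ actually lies on an $a,b$-path; this is precisely where the covering hypothesis will be spent.

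Concretely, I would set
\[
S = U \cup \{u_T, w_T : T \in F'\} \cup \{v_T : T \in F \setminus F'\}.
\]
For each set $T$ in the cover we keep $v_T$ off the separator but pay to block the two paths $a \to u_T \to v_T$ and $a \to w_T \to v_T$ by placing $u_T,w_T$ into $S$; for each set outside the cover we simply place $v_T$ into $S$. The size is then $|U| + 2|F'| + (|F|-|F'|) = |U|+|F|+|F'| \le |U|+|F|+k$, so the budget holds, and plainly $a \notin S$ and $U \subseteq S$.

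Next I would pin down the two components of $G \setminus S$ explicitly. Since $U \subseteq S$ and every $v_T$ with $T \notin F'$ lies in $S$, the vertices $u_T,w_T$ with $T \in F\setminus F'$ are each adjacent only to $a$ outside $S$, giving $C_a = \{a\} \cup \{u_T,w_T : T \in F\setminus F'\}$. Symmetrically, for $T \in F'$ the vertex $v_T$ has all of its non-$b$ neighbors ($u_T$, $w_T$, and the elements of $T \subseteq U$) inside $S$, giving $C_b = \{b\} \cup \{v_T : T \in F'\}$. A direct tally shows $C_a$, $C_b$, and $S$ partition $V(G)$, so these are the only two components of $G\setminus S$.

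Finally, to conclude that $S$ is a minimal $a,b$-separator it suffices, by the full-component characterization, to verify $N(C_a) = N(C_b) = S$. The equality $N(C_a) = S$ is read off immediately from the adjacencies of $a$ and of the surviving $u_T,w_T$. For $N(C_b) = S$ the only nontrivial part, and the crux of the whole argument, is the inclusion $U \subseteq N(C_b)$: each $z \in U$ is covered by some $T \in F'$, and the corresponding $v_T \in C_b$ is adjacent to $z$, so $z \in N(C_b)$. This is the step that consumes the hypothesis $U = \bigcup_{T \in F'} T$; without it some $z \in U$ would be an isolated vertex of $S$ with no full component reaching it, and minimality would fail. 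The remaining contributions to $N(C_b)$, namely $\{u_T,w_T : T \in F'\}$ and $\{v_T : T \notin F'\}$, are routine. With $N(C_a)=N(C_b)=S$ in hand, $S \in \MS(G(U,F), |U|+|F|+k, \{a\}, U)$, so the set is nonempty.
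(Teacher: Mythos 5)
Your proposal is correct and uses exactly the same construction as the paper ($S = U \cup \{u_T,w_T : T \in F'\} \cup \{v_T : T \notin F'\}$); you simply verify minimality more explicitly via the full-component characterization $N(C_a)=N(C_b)=S$, where the paper gives a terser argument. No issues.
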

\begin{proof}
We construct a minimal $a,b$-separator that consists of vertices $z \in U$, vertices $v_T \in N(b)$ with $T \notin F'$, and vertices $u_T,w_T \in N(a)$ with $T \in F'$.
The size of this separator is $|U| + |F| - k + 2k$.
For any path $a,u_T,v_T,b$ or $a,w_T,v_T,b$ exactly one of the vertices is in the separator, so this indeed separates $a$ from $b$.
This separator is minimal because each $z \in U$ is connected to a vertex $v_T \in N(b)$ with $z \in T \in F'$ implying that $v_T$ is not in the separator.
\end{proof}

We complete the NP-completeness proof by showing that for a minimal $a,b$-separator in $\MS(G(U, F), |U|+|F|+k, \{a\}, U)$ we can construct a solution to the set cover problem.

\begin{lemma}
If there is $S \in \MS(G(U, F), |U|+|F|+k, \{a\}, U)$ then there is a subset $F' \subseteq F$ with $|F'| \le k$ and $U = \bigcup_{T \in F'} T$.
\end{lemma}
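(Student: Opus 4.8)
The plan is to reverse the construction from the previous lemma: from a minimal separator $S \in \MS(G(U,F), |U|+|F|+k, \{a\}, U)$ I would read off a subfamily $F'$ and argue that it is both a cover of $U$ and small. The natural candidate is $F' = \{T \in F : v_T \notin S\}$, equivalently the sets whose vertex $v_T$ lies in the component of $G \setminus S$ containing $b$. Since $S$ is a minimal $a,b$-separator, by the characterization recalled in the Notation section it has full components $C_a \ni a$ and $C_b \ni b$ with $N(C_a) = N(C_b) = S$, and I would use these full components throughout.

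First I would establish that $F'$ covers $U$. By hypothesis $U \subseteq S$, so every $z \in U$ lies in $S = N(C_b)$ and hence has a neighbor in $C_b$. The only neighbors of $z$ are $a$ (which lies in $C_a$) and the vertices $v_T$ with $z \in T$; therefore some $v_T$ with $z \in T$ must lie in $C_b$, i.e.\ $T \in F'$. This shows $U = \bigcup_{T \in F'} T$.

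Next I would bound $|F'|$ by counting $|S|$ layer by layer. The elements of $U$ contribute $|U|$ vertices. For each $T \in F$ I would do a short case analysis forced by the gadget structure. Since $v_T$ is adjacent to $b$, either $v_T \in C_b$ (so $T \in F'$) or $v_T \in S$. In the first case both $u_T$ and $w_T$, whose only neighbors are $a \in C_a$ and $v_T \in C_b$, must lie in $S$ to block the paths $a,u_T,v_T,b$ and $a,w_T,v_T,b$, so $T$ contributes the two vertices $u_T,w_T$. In the second case minimality forces $u_T,w_T \notin S$: a vertex $u_T \in S$ would need a neighbor in $C_b$, but its only neighbors $a$ and $v_T$ lie in $C_a$ and $S$ respectively, contradicting $S = N(C_b)$; so $T$ contributes only the single vertex $v_T$. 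Summing gives $|S| = |U| + 2|F'| + (|F| - |F'|) = |U| + |F| + |F'|$, and since $|S| \le |U|+|F|+k$ I conclude $|F'| \le k$.

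The routine parts are the covering argument and the final arithmetic; the step I expect to require the most care is the case analysis pinning down exactly which of $u_T, w_T, v_T$ lie in $S$. I need both directions of the gadget argument—that $u_T,w_T \in S$ when $v_T$ is on the $b$-side, and that $u_T, w_T \notin S$ when $v_T \in S$—and the latter genuinely relies on minimality via $S = N(C_b)$, which forbids any vertex of $S$ from failing to touch $C_b$. Getting this dichotomy tight is what makes the vertex count equal $|U| + |F| + |F'|$ exactly rather than merely bounding it, and that exact identity is what converts the size bound on $S$ into the bound $|F'| \le k$.
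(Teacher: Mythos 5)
Your proposal is correct and follows essentially the same route as the paper: the same choice $F' = \{T : v_T \notin S\}$, the same use of minimality (via $S = N(C_b)$, equivalent to the paper's ``otherwise $S \setminus \{z\}$ would be a separator'') to show $F'$ covers $U$, and the same exact count $|S| = |U| + |F| + |F'|$ from the dichotomy $u_T, w_T \in S \iff v_T \notin S$. You simply spell out the case analysis that the paper states tersely.
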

\begin{proof}
We construct the subset $F'$ by including the sets $T$ with $v_T \notin S$.
This is a set cover because for each $z \in U$ there must be a vertex $v_T \notin S$ with $z \in T$ because otherwise $S \setminus \{z\}$ would be an $a,b$-separator.
We note that the vertices $S \cap N(b)$ determine $S$ uniquely, in particular forcing $u_T,w_T$ to $S$ if and only if $v_T \notin S$, so we can compute that $|U| + 2|F'| + |F| - |F'| = |S|$.
\end{proof}

This completes the proof of Theorem~\ref{the:np}.

\section{Conclusion}
We gave a fixed-parameter-delay and incremental-polynomial enumeration algorithm for enumerating minimal $a,b$-separators of size at most $k$.
To the best of our knowledge, this is the first algorithm for this problem with non-trivial time complexity.
While our algorithm seems not completely impractical, ideally we would prefer a polynomial-delay algorithm to optimally implement the enumeration as a subroutine in various applications.
We gave an NP-completeness proof that illustrates why our approach falls short in obtaining a polynomial-delay algorithm.
Informally, our proof shows that any algorithm based on Takata's recurrence must have a more ``global'' view on the search space than just the current subtree.
The NP-completeness result also illustrates that enumerating minimal vertex covers of size at most $k$ in a bipartite graph is an important open special case of this problem.

\bibliographystyle{plain}
\bibliography{paper}

\end{document}